\documentclass[12pt, a4paper]{amsart}

\usepackage{hyperref}
\usepackage{amsmath, amsfonts, amsthm,bm,mathrsfs}
\usepackage{amsmath}
\usepackage{natbib}
\usepackage{graphicx}
\usepackage{amscd,amssymb,amsthm}
\textwidth=6.0in
\textheight=9.0in
\hoffset=-0.5in
\voffset=-0.3in

\makeatletter
\g@addto@macro{\endabstract}{\@setabstract}
\newcommand{\authorfootnotes}{\renewcommand\thefootnote{\@fnsymbol\c@footnote}}
\makeatother

\theoremstyle{plain}
\newtheorem{theorem}{Theorem}[section]

\newtheorem{proposition}[theorem]{Proposition}

\theoremstyle{remark}

\begin{document}
\bibliographystyle{plainnat}

\begin{center}
\LARGE Estimating the turning point  location in  shifted exponential model of  time series
\vskip20pt

 \normalsize
  \authorfootnotes
  Camillo Cammarota\footnote{cammar@mat.uniroma1.it} \\
  Department of Mathematics  University La Sapienza,\\
 P.le A. Moro 5, 00185 Rome, Italy
 
 \par \bigskip

  \today
\end{center}

\begin{abstract}
We consider the distribution of the turning point  location of    time series modeled as the  sum of   deterministic trend  plus random  noise. If the variables are modeled by    shifted exponentials,  whose location parameters define the trend, we provide  a formula for computing the  distribution of the turning point location and consequently   to estimate a confidence interval for the location. 
We  test this formula in   simulated data series  having a trend with    asymmetric minimum, investigating the coverage rate as a function of a bandwidth parameter. 
The method  is applied to estimate the  confidence interval of the minimum location of the  time series of  RT   intervals  extracted from   the  electrocardiogram  recorded during the exercise test.  We discuss the connection  with stochastic ordering.   \end{abstract}

Keywords:
turning point;     shifted exponential; time series;   ECG;  RT interval;  exercise test.

\section{Introduction}

Real data time series  are usually modeled  as the sum of a deterministic sequence (trend) plus  a stochastic component (noise).  These series   often exhibit non monotonic trends   with one or more   turning points (maximum or minimum). One of the problems mostly investigated  is   on line signaling  of the turning point, for a timely detection of the regime switch.  Various methods have been adopted to test the occurrence of these events; we refer to \cite{andersson} for a review.

A different perspective concerns the  precision of the location of the turning point for a fixed series of observations.  The location  time can be considered a random variable whose distribution has to  be estimated from the data.
This problem was considered in a parametric  approach  assuming that  the trend   close to a turning point  is a second order polynomial. In this case   there are explicit formulas for the location   distribution as a function of the  distributions of the estimated parameters. Consequently  confidence intervals  can be estimated as in  \cite{mudambi, plassmann}; more recently confidence regions for the  minimum  location  in multivariate case are estimated in \cite{sambucini}. 
 
    The detection and estimate of turning points has  different characteristics  with respect to   change point  analysis.  A distinctive  feature of turning points is that   the trend changes continuously.  Conversely in a change point  the trend  of the series  changes abruptly;  for a review of the intensive literature on the subject we refer to \cite{jandhyala}.  Recently  in the case of multiple change points  in which the trend  of the series is a stepwise function, confidence band for the jumps and  confidence intervals for their locations were estimated  in \cite{frick}. 

In  time series from several areas the assumption  of a polynomial trend is not correct, for instance   when the  
 extremum is non symmetric, with left and right trends  having different slopes or  convexities. This is the case of the series here considered, obtained from the measurement of time intervals in cardiac monitoring. In these and  many other cases  non parametric methods for estimating the trend  are required,  for which we refer to \cite{ruppert, wasserman}.      These methods   allow to estimate a confidence band for the trend  and consequently  also a confidence interval for the  value of the turning point,  but they   do not provide a confidence interval for its location. 
 The estimate of this  interval is an open problem and  it is the main motivation  of the present paper.
   
We consider  the case  of series  modeled by shifted exponentials: the trend is modeled by a sequence of location parameters and   the stochastic component  by  a sequence of  standard exponentials with constant rate. 
 
 This model is motivated by the  analysis of series of time intervals in cardiac monitoring. 
 The   occurrence times of the   R peaks  in the electrocardiogram (ECG) are  modeled as  a  point process   and various distributions have been proposed to model the interbeat  RR intervals, which measure the duration of the cardiac cycle \cite{barbieri}. The existence of a functional refractory period suggests that   the distribution of  the interbeat  RR interval   can be modeled  by a  shifted exponential, as proposed in \cite{corino}. During the  stress test the series of RR intervals shows a clear trend having a global minimum. The decreasing region  close to the minimum has usually a different slope with respect to the increasing one \cite{cammarota2011-1}.
Strictly related to the previous  one is  the series of RT intervals, defined as the interval from the R peak to the apex of T wave, corresponding to the duration of the repolarization phase of the cardiac cycle.   Both RR and RT series during the stress test have a global minimum. Despite of the importance  of   these series  in clinics (see for instance \cite{lauer}),   the problem of the minimum location was  only recently  addressed in  \cite{cammarota2012}.  

The paper is organized as follows.
In sec. 2 we provide an integral    representation   of  the  distribution of the minimum location  as a function of the trend, from which confidence intervals can be obtained.   In sec. 3 we  discuss  a connection with   stochastically ordered variables. 
In sec 4 we check the confidence  interval on simulated series having  non symmetric minimum,  computing numerically the coverage rate  of the  interval. In sec. 4 we apply the method to a real data series obtained from cardiac monitoring. In sec. 6 we provide some  conclusions.

\section{The  turning point location}
 
For the time series $Y_t, \ t\in I$  were $I$ is  a finite interval of integers, we assume  the  model

\begin{equation}
Y_t= T_t +\epsilon_t
\label{trend}
\end{equation}
where $T_t$ denotes  a deterministic sequence (trend) and $\epsilon_t$ a  sequence of  continuous random variables (noise).   The sequence    $\epsilon_t$ is  assumed to be   stationary  and independent   with regular density  denoted $f(u), u\in \mathbb{R}$. We consider  the case in which the trend has a minimum, since the maximum case can be treated analogously.
The location of the minimum is defined as the random variable $\tau$  on $I$ 

\begin{equation}
 \tau = {\rm argmin} \{ Y_t, \  t \in I \}
 \label{tau}
 \end{equation}
 Consequently the event that the minimum is attained at the index $s$ is 
\begin{equation}
\{\tau=s\}= \bigcap_{t:\,t\ne s} \{ Y_t > Y_s \} , \quad s\in I
 \end{equation}
Using the independence of the $\epsilon _t$,  the discrete density on $I$ of the r.v. $\tau$ is 

\begin{equation}
P(\tau=s\ ) = \prod_{t:\,t\ne s} P( Y_t > Y_s), \quad s\in I
\label{prod}
 \end{equation}
where  the probability that the minimum is attained at more than one point  is zero.

In eq. \ref{trend} we assume  the $Y_t$ to be  shifted exponential variables   with location parameters $T_t$ and 
$\epsilon_t$=Exp$(\lambda)$ so that 

\begin{equation}
P(Y_t>y) = e^{- \lambda \ (y-T_t) \ \chi_{[T_t, +\infty)} (y)}, \quad t\in I
\label{shift}
\end{equation}
Here   the rate parameter $\lambda$ is constant and $\chi_A$ denotes the characteristic function of $A$.  
Our main result is the following one.

\begin{proposition}
For any  trend $T_t, t\in I, $ the distribution of the minimum location  of  the sequence defined by  eq. \ref{shift}
is 
\begin{equation}
P(\tau=s) = \int_{T_s}^{+\infty} du\  \lambda\  e^ {- \lambda \displaystyle\sum_{t:\, T_t \le u}( u-T_t)}\ , \quad s\in I
\label{expB}
\end{equation}
\end{proposition}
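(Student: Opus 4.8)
The plan is to derive eq.~\ref{expB} by conditioning on the value of $Y_s$ and then reading off the exponents from the explicit survival functions in eq.~\ref{shift}. Starting from the definition eq.~\ref{tau}, the event that the minimum is attained at $s$ is $\{\tau=s\}=\bigcap_{t\neq s}\{Y_t>Y_s\}$. Under eq.~\ref{shift} the variable $Y_s=T_s+\epsilon_s$ is absolutely continuous with density $f_s(u)=\lambda e^{-\lambda(u-T_s)}$ for $u\ge T_s$ and $f_s(u)=0$ for $u<T_s$. Since the $\epsilon_t$ are independent, the family $\{Y_t:t\neq s\}$ is independent of $Y_s$ and its members are mutually independent, so conditioning on $Y_s=u$ and using the law of total probability gives
\[
P(\tau=s)=\int_{T_s}^{+\infty}\lambda\,e^{-\lambda(u-T_s)}\prod_{t:\,t\neq s}P(Y_t>u)\,du ,
\]
the probability of a tie being zero by absolute continuity. (This integral is the precise form of the product written in eq.~\ref{prod}.)

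Next I would evaluate the product for a fixed $u$. By eq.~\ref{shift}, $P(Y_t>u)=e^{-\lambda(u-T_t)}$ when $T_t\le u$ while $P(Y_t>u)=1$ when $T_t>u$; hence the factors with $T_t>u$ disappear and
\[
\prod_{t:\,t\neq s}P(Y_t>u)=\exp\!\Big(-\lambda\!\!\sum_{t:\,t\neq s,\ T_t\le u}\!\!(u-T_t)\Big).
\]
On the range of integration one has $u\ge T_s$, so the index $t=s$ likewise satisfies $T_s\le u$, and its contribution to a sum over $\{t:T_t\le u\}$ would be exactly the factor $e^{-\lambda(u-T_s)}$ already supplied by $f_s$. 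Absorbing that factor into the exponential turns the restricted sum $\sum_{t\neq s,\ T_t\le u}$ into the full sum $\sum_{t:\,T_t\le u}$, and eq.~\ref{expB} follows.

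I do not expect a genuine obstacle here: the argument is essentially bookkeeping, and the only points needing care are tracking the characteristic functions $\chi_{[T_t,+\infty)}$ so that precisely the indices with $T_t\le u$ survive in each product, and justifying the conditioning step from the independence of the noise. As a consistency check one can verify directly that eq.~\ref{expB} defines a probability distribution on $I$: setting $g(u):=\sum_{t:\,T_t\le u}(u-T_t)$, which is continuous, nondecreasing and piecewise linear with $g'(u)=\#\{t:T_t\le u\}=\sum_{s\in I}\chi_{[T_s,+\infty)}(u)$ off the finitely many knots $T_t$, summing eq.~\ref{expB} over $s\in I$ yields $\int_{\mathbb R}\lambda\,g'(u)\,e^{-\lambda g(u)}\,du=\big[-e^{-\lambda g(u)}\big]_{-\infty}^{+\infty}=1$.
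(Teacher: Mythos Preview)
Your proof is correct and follows essentially the same route as the paper: condition on $Y_s=u$, use independence to factor the survival probabilities $P(Y_t>u)$, drop the factors with $T_t>u$, and absorb the $e^{-\lambda(u-T_s)}$ from the density $f_s$ into the sum so that the restriction $t\neq s$ disappears. Your added consistency check that $\sum_{s\in I}P(\tau=s)=1$ via $g'(u)=\#\{t:T_t\le u\}$ is a nice touch not present in the paper.
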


\begin{proof}
 Denoting $f_t$ the density of $Y_t$, i.e. 
\begin{equation}
f_t(u) =  \lambda\  e^{- \lambda ( u- T_t)} \chi_{[T_t, +\infty)}(u)
\label{density}
\end{equation} 
we have
$$
P(\tau= s)=  \int_{-\infty}^{+\infty}  du \ f_s (u) \prod_{t:\,t\ne s} e^{- \lambda\ (u -T_t) \ \chi_{[T_t, +\infty)} (u)} $$
$$
= \int_{-\infty}^{+\infty}  du \ f_s (u)\  e^{- \lambda \displaystyle\sum_{t:\,t\ne s}  (u -T_t) \ \chi_{[T_t, +\infty)} (u)} 
$$
From
$$
\sum_{t:\, t\ne s}  (u -T_t) \ \chi_{[T_t, +\infty)} (u) = \sum_{t:\, t\ne s, T_t \le u}  (u -T_t) 
$$
and  from eq. \ref{density} the last integral is
$$
 \int_{-\infty}^{+\infty}  du \   \lambda\  e^{- \lambda ( u- T_s)} \chi_{[T_s, +\infty)}(u)\    e^{-\lambda \displaystyle\sum_{t:\, t\ne s,\, T_t \le u}  (u -T_t)}
 $$
and the result follows.
\end{proof}
Denoting 
\begin{equation}
B(u) =  \sum_{t: \, T_t \le u}  (u -T_t)
\label{B}
\end{equation}
the eq. \ref{expB} reads
\begin{equation}
P(\tau=s) = \int_{T_s}^{+\infty} du\  \lambda\  e^ {- \lambda B(u)}
\label{expB1}
\end{equation}

The function $B(u)$ plays a central role as it measures the area delimitated by the trend $T_t$ below the horizontal line of height $u$. Let us  assume  for instance that $T_t$ is a regular  convex trend having a minimum.  
If the index $s$ is far from the  minimum  location so that $T_s$ is well above  the minimum value, 
since the integration domain is $[T_s, +\infty)$,  the  argument $u$ in $B(u)$ is   large   and consequently  $B(u)$ is large and  $P(\tau = s)$ is small. 

\section{Asymmetric  minimum}

We illustrate the  use of Proposition 2.1 in the  analysis of the model  in eq. \ref{trend}  when the trend  $T_t$ is estimated parametrically and its functional dependence on the $t$ index is known.
We always assume  that the   noise   $\epsilon_t= {\rm Exp}(\lambda)$  is an  i.i.d.   sequence of exponentials with mean $1/\lambda$.  Eq. \ref{expB1} provides a representation of the minimum location distribution in closed form, from which exact computations can be performed. We first consider a piecewise linear symmetric  trend having  the minimum at $t=0$: 
\begin{equation}
T_t=a |t|, \quad a>0
\label{trend1}
\end{equation}
We have 
$$B(u) =  \sum_{t: T_t \le u}( u-T_t) = \sum_{t:a|t| \le u} ( u - a|t|) \simeq \int_{-u/a}^{u/a } (u - a|t|)\  dt = \frac{u^2}{a}$$
and from  eq. \ref{expB1} one gets
$$ 
P(\tau=s) = \int_{a|s|}^{+\infty} \lambda\  e^{- \lambda\  \frac{u^2}a}\  du = \int _1^{+\infty} \lambda \ e^{- \lambda a s^2 v^2} dv
$$
The last integral shows  the dependence on  the slope parameter $a$, i.e.  the more deep  is the minimum, the  more concentrated  is the distribution of $\tau$, as expected.  This property is similar to the one showed in \cite{frick} in which   larger  jumps  of the  stepwise trend are associated to smaller  location intervals. 

We consider  an example of  time series  with  non symmetric minimum, i.e. having different   right and left slopes. 
\begin{equation}
T_t=\begin{cases} \quad  bt,  & \quad  t\ge 0 \\
-at, & \quad  t<0. \end{cases}
\label{trend2}
\end{equation}
where we assume $0<a<b$. Using the same argument as above one has
\begin{equation}
B(u) = \frac{u^2}{2a} + \frac{u^2}{2b}
\label{asymm}
\end{equation}
Using eq.  \ref{expB1} the distribution  of $\tau$ is given by 
 \begin{equation}
 P(\tau=s) = \begin{cases} \int_{bs}^{+\infty} du\  \lambda\  e^ {- \lambda B(u)}& \quad  s\ge 0 \\
\int_{-as}^{+\infty} du\  \lambda\  e^ {- \lambda B(u)}& \quad  s < 0. \end{cases}
\end{equation}
The expectation of $\tau$ is 
$$\mathbb{E} (\tau) =
\sum_{s} s P(\tau=s) = \sum_{s>0} s \int_{bs}^{+\infty} du\  \lambda\  e^ {- \lambda B(u)} + 
\sum_{s<0} s \int_{-as}^{+\infty} du\  \lambda\  e^ {- \lambda B(u)} 
$$
The second summand can be written as 
$$
-\sum_{s>0} s \int_{as}^{bs} du\  \lambda\  e^ {- \lambda B(u)} - \sum_{s>0} s \int_{bs}^{+\infty} du\  \lambda\  e^ {- \lambda B(u)}
$$
and so  one has
$$
\sum_{s} s P(\tau=s) = - \sum_{s>0} s \int_{as}^{bs} du\  \lambda\  e^ {- \lambda B(u)}
$$
Using the  integration variable $v=u/s$  and eq. \ref{asymm},  the previous term is 
$$
-  \sum_{s>0} s^2  \int_{a}^{b} dv\  \lambda e^{ -\lambda s^2 (\frac{v^2}{2a} + \frac{v^2}{2b})}
$$
We  firstly perform the sum over $s$   approximating   it with the  Gaussian integral 
$$\sum_{s>0} s^2 e^{- k s^2} \approx \frac12\frac{\sqrt{2\pi}}{(2k)^{3/2}}, \quad k>0$$
and then we compute the integral  over $v$ 
$$ \int_a^b \ dv \ \lambda \frac12 \frac {\sqrt{2\pi}}{ (2\lambda(\frac{v^2}{2a} + \frac{v^2}{2b}))^{3/2}}$$
which finally  gives the  formula
\begin{equation}
 \mathbb{E} (\tau) = - \frac{\sqrt{2\pi}}{\sqrt \lambda}\frac{b-a}{\sqrt{ab(a+b)}}
 \label{bias}
\end{equation}
Notice that the only  approximation  we have done consists in replacing sums with integrals.
This results shows that the natural estimator $\mathbb{E} (\tau)$ for the minimum location is biased.  In particular it is biased towards the region with smaller slope, in the example   towards negative indices since $a<b$.

\section{Stochastic ordering}

The existence of a  deterministic trend is strictly related to the stochastic ordering between the variables.
The notion of stochastic order is used  in many applications (for a general reference  in reliability theory  see \cite{shaked}) but, at the best of our knowledge, it is rarely used in time series.  
We recall that   the r.v. $Y_1$ is stochastically smaller than $Y_2$,  written $Y_1 \prec Y_2$,  if   for any $y\in  \mathbb{R}$ one has 
$$P(Y_1>y) \le  P(Y_2 >y)$$
 For a monotonic increasing trend,  $T_t < T_u$ if $ t < u$,  the sequence  $Y_t$ is stochastically increasing. Actually one has 
$$P(Y_t >y) = P( T_t + \epsilon _t > y) <   P( T_u + \epsilon _t > y)= P( T_u+ \epsilon _u > y)= P(Y_u >y)$$
The  shifted exponentials  provide   an example of stochastically ordered  variables if  the location parameters  form  a monotonic sequence.
A remarkable example in which the distribution of $\tau$ can be computed  exactly is when the 
variables  $Y_t$ are such that 
\begin{equation}
P(Y_t > y) = P(Y>y)^{\alpha_t} 
\label{galfa}
\end{equation}
for some suitable  variable $Y$   and some positive sequence $\alpha_t$.
If  $\alpha_t $ is decreasing then the  sequence $Y_t$ is stochastically increasing. 
Actually, denoting  $G(y) = P(Y>y), \ G_t(y) =P(Y_t>y)$, with $G(y)>0$, 
if  $\alpha_t > \alpha_u$ for  $ t < u$,  one has 
$$ G_t(y) =  G(y) ^{\alpha_t} <   G(y) ^{\alpha_u}= G_u(y)$$
\begin{proposition}
If a time series  $Y_t, \ t\in I,$  satisfies eq.  \ref{galfa}, then the distribution of the minimum location is:
\begin{equation}
P(\tau=s)= \frac{\alpha_s}{\sum_{t\in I}\ \alpha_t}\ , \quad s\in I
\label{alphaoversum}
\end{equation}
\end{proposition}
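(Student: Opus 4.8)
The plan is to compute $P(\tau = s)$ directly from the product formula in eq. \ref{prod} and then recognize the telescoping structure that a power law hypothesis induces. First I would substitute the hypothesis \ref{galfa} into \ref{prod}, conditioning on the value of $Y_s$. Writing $G(y) = P(Y>y)$ and using independence, $P(\tau = s) = \mathbb{E}\bigl[\prod_{t \ne s} P(Y_t > Y_s \mid Y_s)\bigr] = \int \bigl(-dG_s(y)\bigr) \prod_{t\ne s} G(y)^{\alpha_t}$, where $-dG_s$ is the law of $Y_s$. Since $G_s(y) = G(y)^{\alpha_s}$, the integrand collapses to $G(y)^{\sum_{t} \alpha_t}$ against the density of $Y_s$.

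The key step is then a change of variables. Setting $w = G(y) \in (0,1)$, the density of $Y_s$ pushes forward to the density of $w^{\alpha_s}$, i.e. if $Y_s$ has survival function $w = G(y)^{\alpha_s}$ then $G(y) = w^{1/\alpha_s}$ and the integral becomes $\int_0^1 w^{(\sum_t \alpha_t)/\alpha_s} \,\tfrac{1}{\alpha_s} w^{1/\alpha_s - 1}\, \tfrac{dw}{\text{(stuff)}}$; more cleanly, I would just substitute $v = G(y)^{\alpha_s} = G_s(y)$, which is uniform on $(0,1)$ under the law of $Y_s$, so that $G(y) = v^{1/\alpha_s}$ and
$$
P(\tau = s) = \int_0^1 \prod_{t \ne s} \bigl(v^{1/\alpha_s}\bigr)^{\alpha_t}\, dv = \int_0^1 v^{\left(\sum_{t\ne s}\alpha_t\right)/\alpha_s}\, dv = \frac{1}{1 + \left(\sum_{t\ne s}\alpha_t\right)/\alpha_s} = \frac{\alpha_s}{\sum_{t\in I}\alpha_t},
$$
which is exactly \ref{alphaoversum}. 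A small remark I would add is that one checks $\sum_{s} P(\tau = s) = 1$ automatically, confirming that the event of a tie has probability zero under the continuity assumption on $Y$.

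The main obstacle is essentially bookkeeping rather than a genuine difficulty: I need to justify passing from the product $\prod_{t\ne s} P(Y_t > Y_s)$ to the integral representation, which requires that $Y$ (hence each $Y_t$) be continuous so that $G_s(Y_s)$ is genuinely uniform on $(0,1)$ and so that the probability of ties vanishes — this is already implicit in the standing assumptions of sec. 2. One should also note the degenerate edge case $G(y) = 0$ (or the support of $Y$), where the substitution $v = G_s(y)$ is still valid since such $y$ contribute a null set. Everything else is the one-line Beta-type integral $\int_0^1 v^{c}\,dv = 1/(c+1)$ with $c = \bigl(\sum_{t\ne s}\alpha_t\bigr)/\alpha_s$, and the result follows.
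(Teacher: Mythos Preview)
Your proof is correct and follows essentially the same approach as the paper: express $P(\tau=s)$ as $\int f_s(u)\,G(u)^{\sum_{t\ne s}\alpha_t}\,du$, then reduce to a one-line power integral via a change of variables. The only cosmetic difference is that the paper substitutes $g=G(u)$ (using $f_s(u)=-\alpha_s G(u)^{\alpha_s-1}G'(u)$ to obtain $\int_0^1 \alpha_s\, g^{\sum_t \alpha_t -1}\,dg$), whereas you substitute $v=G_s(y)=G(y)^{\alpha_s}$ via the probability integral transform; the two substitutions are related by $v=g^{\alpha_s}$ and lead to the same answer.
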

\begin{proof}
We denote $f_t$ the density of $Y_t$ and  using eq. \ref{prod} we get
$$
P(\tau=s) = \int_{-\infty}^{+\infty}  du \ f_s (u) \prod_{t:\, t\ne s} \int_u^{+\infty}  dv f_t(v) 
$$
From 
$$
\int_u^{+\infty}  dv f_t(v)= G_t(u)= G(u)^{\alpha_t}
$$
one has
$$
P(\tau=s) = \int_{-\infty}^{+\infty}  du \ f_s (u)\   G(u) ^{\displaystyle\sum_{t:\, t\ne s} \alpha_t  } 
$$
Since
$$
f_s(u) = - \alpha_s G(u)^{\alpha_s -1} G'(u)
$$
the above integral is
$$
-  \int_{-\infty}^{+\infty}  du \   \alpha_s G(u) ^{\displaystyle\sum_{t\in I} \alpha_t  -1} G'(u)= \int_0^1 \ dg\  \alpha_s \ g \ ^{\displaystyle\sum_{t\in I} \alpha_t  -1} 
$$
that gives the result.
\end{proof}
In the case of   exponential variables $Y_t= Exp(\alpha_t)$,  with $G(y) = e^{-y}$,  one has 
$$ G_t(y) = e^{-\alpha_t y} = G(y)^{\alpha_t}$$
and so  eq \ref{alphaoversum} holds, as it is well known.
In case of shifted exponentials assume that the sequence $T_t$ is increasing.  
If  $t<u$  one has for any $y\in \mathbb{R}$
$$
e^{- \lambda\ (y-T_t) \ \chi_{[T_t, +\infty)} (y)}  \le  e^{- \lambda\ (y-T_u) \ \chi_{[T_u, +\infty)} (y)}
$$
and so   the  series  $Y_t$ is stochastically increasing, but the condition  eq. \ref{galfa} of Proposition 3.1 is not satisfied.

\section{Simulated series}

An application of Proposition 2.1  concerns the case in which  no information a priori is available on  the trend,  that  has to be   estimated from the data. In this case a trend-shifted exponential model has to be first fitted and  subsequently an estimator of the minimum location will be obtained.
In view of the application to a real case, we  first check the performance of this  estimator in a simulation.
We consider two  data generating models: the first one has   a  piecewise linear  asymmetric trend as the one  considered in  Sec. 3 (eq. \ref{trend2}).

\begin{equation}
T_t= \begin{cases}
 -a(t - t_0) & t <  t_0\\
\ \  b(t -t_0) & t \ge t_0
\end{cases}
\label{lintrend}
\end{equation}
where 
$t_0=500, \ \ a=  1/300, \ \ b=1/100$.
The index range $I = 1,..., 1000$  has the typical length of the   data of the next section; the noise term is  an i.i.d. sequence $\epsilon_t = {\rm Exp(1)}$.
We have showed  in Sec 3 that in this case the estimator of the minimum location is biased.  Eq.  \ref{bias} for $a=1/300, b=1/100, \lambda=1$ gives $ \mathbb{E} (\tau)\simeq -25$.

The second example   with asymmetric exponential trends
\begin{equation}
T_t= \begin{cases}
2(e^{- a (t-t_0)} -1) & t <  t_0\\
4(1-e^{- b (t-t_0)}) & t \ge t_0
\end{cases}
\label{exptrend}
\end{equation}
where 
$t_0=500, \ \ a=  1/500, \ \ b=1/100$, 
 is a natural model in  forced linear systems and  it was used in \cite{cammarota2011-1} for cardiac series. In both cases the true minimum location is given by the parameter $t_0$.
 In  simulations and subsequent analysis we use the free statistical software  R \cite{R}.
The  location parameters  of the exponentials  can be estimated by 
$$\hat{T}_t= min\{ Y_{t-h}, ..., Y_{t+h}\}$$
where  $h$  plays the role of the  bandwidth in non parametric estimation of the trend.
The residuals are  
$$ \hat\epsilon_t = Y_t - \hat{T}_t$$
and  the  estimated  rate  parameter  $\hat\lambda$ of the exponential can be  obtained   from a fit of the histogram of  the residuals. We consider the  estimator  of  the minimum location
\begin{equation}
\hat\tau= {\rm argmin} \{ \hat {T_t}, \ \ t \in I \}
\label{tauhat}
\end{equation}
According to  eq. \ref{expB}, the  distribution   of $\hat\tau$  is 
\begin{equation}
 P(\hat\tau=s) = \int_{\hat{T}_s}^{+\infty} du\  \hat\lambda\  e^ {- \hat\lambda \displaystyle\sum_{t: \hat{T}_t \le u}( u-\hat{T}_t)}
\label{expBbis}
\end{equation}

The confidence interval  of  the true minimum location $t_0$   of level $1- \alpha$ is obtained  by the quantiles  of  the $\hat \tau$  distribution of order $\alpha/2$ and $1-\alpha/2$. 
We have simulated the  series in order to investigate the role of the free parameter $h$, for  $h$ ranging from 5 to 20. For each of 200  realizations  the  95\% confidence interval of 
the minimum location is obtained.  We have focused on two parameters: the coverage rate, computed as the fraction of the intervals that contain the true value $t_0=500$, and the length of the interval, computed as the mean of the lengths over the realizations. The intervals obtained in the simulation  were  non symmetric with respect to 500, due to the non symmetric trend. The steps of the method are illustrated in Fig. 1 in the case of the 
 exponential trend.
 The figure  shows one of the  simulated  series, a zoom to the minimum and the plots of the distribution of $\hat\tau$ and of its logarithm.  From the distribution  of  $\hat\tau$ the 0.025 and 0.975 quantiles are computed and reported on the  plot of the series to illustrate the confidence interval of the minimum location.

\begin{table}
\caption{Simulated series: coverage rate of the  minimum location confidence interval  and its  amplitude as a function of the  bandwidth $h$. }
{\begin{tabular}{ccccccc}
\hline\\
\multicolumn{7}{c} {Peacewise linear trend}\\[2ex]

 bandwidth & 5 & 8 & 11 & 14 & 17 & 20\\
 coverage rate & 0.86 &  0.88 & 0.94 &  0.92 & 0.95 & 0.98\\
 interval length &  34 &  35 &  35 & 38  &  42 & 44 \\
 \hline \\ [0.5ex]
 \multicolumn{7}{c} {Exponential trend}\\[0.5ex]\\

 bandwidth & 5 & 8 & 11 & 14 & 17 & 20\\
 coverage rate & 0.74 &  0.78 & 0.87 &  0.90 & 0.92 & 0.94\\
 interval length &  29 &  29 &  32 & 35  &  37 & 42 \\
\hline
 \end{tabular}}
\end{table}
The results  on the  role of the parameter $h$ are  summarized in Table 1. The amplitude of the confidence interval is increasing in $h$.
The choice of $h$ is subjected  to the bias-variance tradeoff, similar to the case of bandwidth in non parametric estimation of trend. If $h$ is small, say $h=5$, the trend-shift has  large variance and small bias; if $h$ is large, say  $h=20$, the opposite occurs. We also remark that  since  the parameter $\lambda$ was estimated from the data,  its  estimate is biased   due to  non stationarity.
\begin{figure}[!th]
\begin{center}
 \includegraphics[width=13cm]{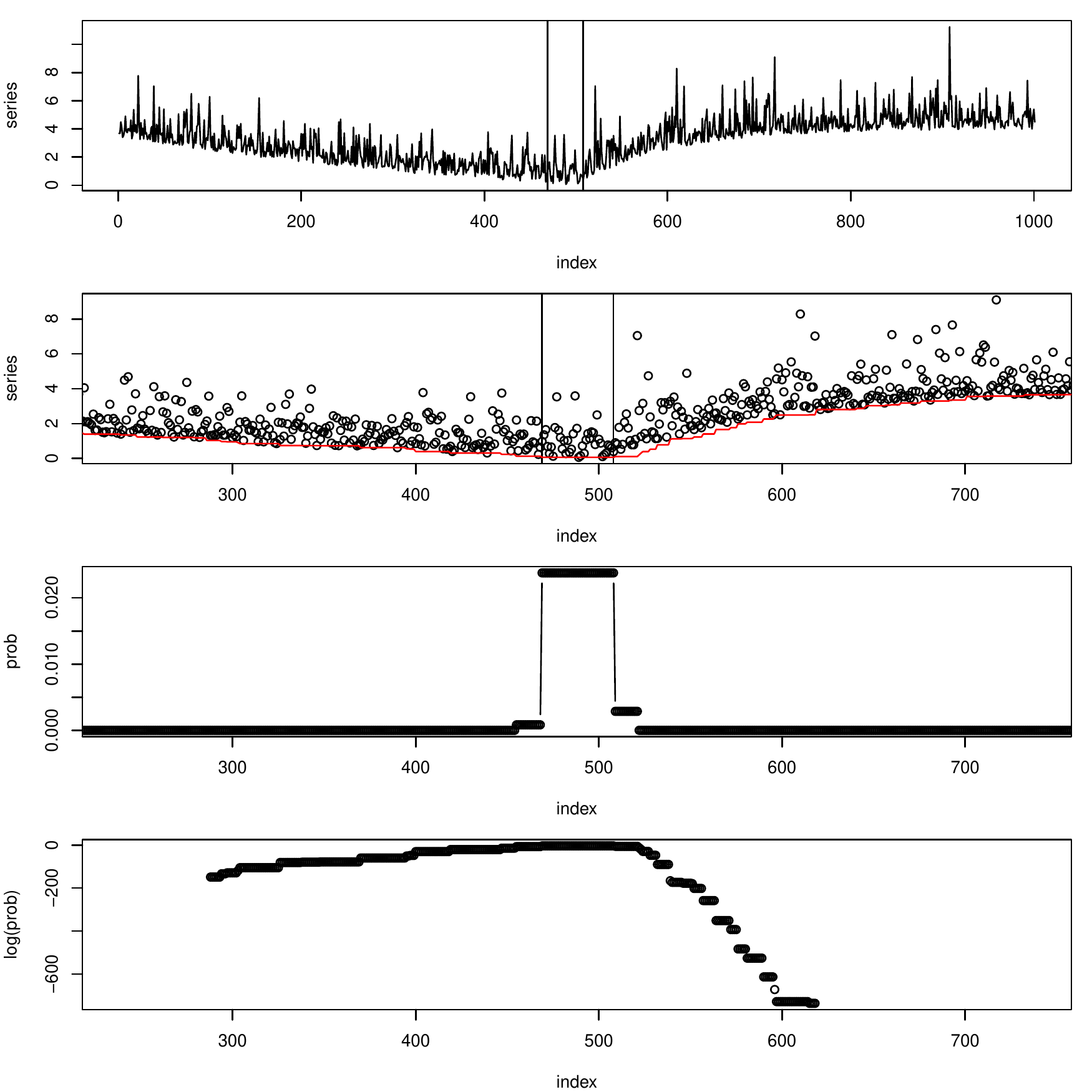}
 \end{center}
\caption{First panel: simulated   series  with  exponential trend and exponential noise; second: a zoom to the minimum and  the estimated trend (color online); third: the distribution of the minimum location; forth: the natural logarithm of the distribution. Vertical bars denote  the  95\% confidence interval of the minimum location. Bandwidth $h=20$.}
\label{fig1}
\end{figure}

\section{Real data series}
 The RR and RT intervals extracted  from the ECG  provide  respectively information on the duration of the cardiac cycle (equivalently on the heart rate) and on the repolarization phase. In the ambulatorial stress test these series show  a decreasing trend during the exercise and an increasing one  during the recovery, separated by a global minimum.  We refer to \cite{gibbons}    for  the clinical use of the test. In    \cite{cammarota2012}  the estimation of the minimum location of RR and RT series  was performed using standard methods aimed to  put into evidence  a difference in the  locations   of the two minima. These methods did not provide a confidence interval for the minimum location.   The minimum of the RR series, that corresponds to the maximum heart rate, is sharp and is affected by  low noise, while the minimum of the RT is not sharp and affected by a larger level of noise.  Hence an estimate of the confidence interval for the RT minimum 
 location is necessary to  assess the relative location of the two minima. We apply the  methods  of the previous section to an example of RT series (Fig. 2). The dependence  of the confidence interval as a function of the bandwidth parameter $h$ is shown in Table 2. 
 The intervals result to be nested increasing as $h$ increases, as it is   observed in  the simulated series.  Using the  value $h=11$  the amplitude of the  95\% confidence interval is 21 beats, providing a good localization of the minimum.

\begin{table}
\caption{The RT electrocardiographic series:  left and right  endpoints of the  95\%  confidence interval in correspondence of the bandwidth.}

\begin{tabular}{rrrrrrr}
\hline\\
bandwidth & 5 & 8 & 11 & 14 & 17 & 20 \\ 
  left end & 1148 & 1145 & 1143 & 1140 & 1137 & 1135 \\ 
  right end & 1159 & 1162 & 1164 & 1167 & 1170 & 1172 \\ 
  \hline
\end{tabular}

\end{table}

\begin{figure}[!th]
\begin{center}
 \includegraphics[width=13cm]{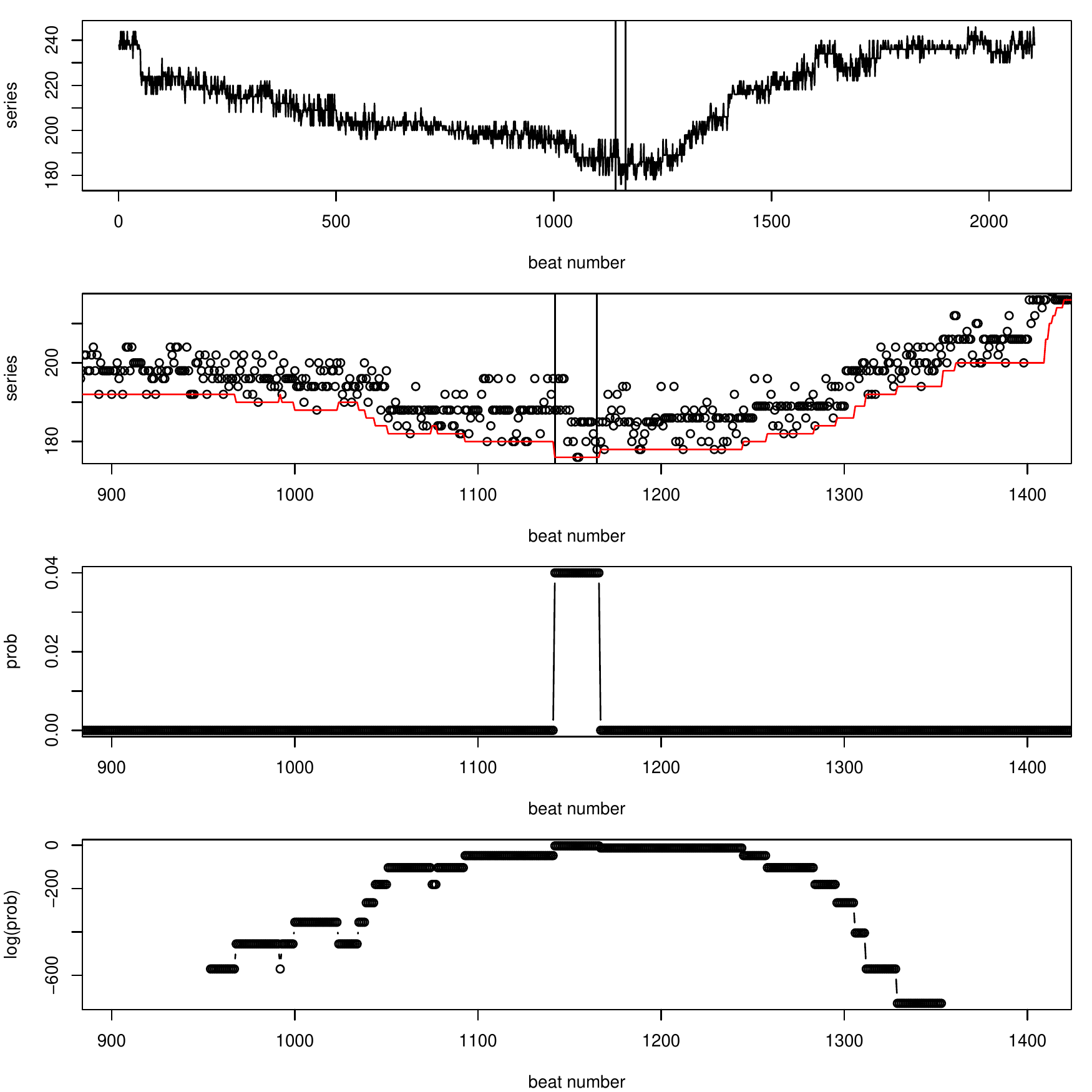}
 \end{center}
\caption{Top panel: RT electrocardiographic series   and the  95\% confidence interval of the minimum location (the  index  of x-axis is the beat number and the y-scale is millisecond); second: a zoom to the minimum and  the estimated trend (color online); third: the distribution of the minimum location and forth: its logarithm. }
\label{fig2}
\end{figure}

\section{Concluding remarks}

We have considered  time series decomposed into  deterministic trend and  serially independent  noise  with exponential distribution.  The  consideration of these series is motivated by their application to sequence of time intervals measured in cardiac recordings. 

The main result  concerns the distribution  of the minimum location. If the trend is estimated parametrically,   this distribution can be obtained exactly, via the evaluation of  an integral, as a function of the  parameters  of the trend such as slope and convexity. This approach   generalizes, at least for the  exponential noise, the one followed in \cite{mudambi, plassmann, sambucini}. We show in one example  that  if the minimum is asymmetric the expectation of the minimum location is biased. Various types of  trend having discontinuity in   first and second derivatives can be considered in future work. An open problem is the extension to Gaussian noise. 
 If the trend is  estimated non parametrically from the data our result  provides a straightforward method  to compute numerically the confidence interval of the location. This approach is new, at best of our knowledge. The investigation of the properties of the minimum location estimator  in this case is an open problem. 
 In order to check the method, we have  simulated   models  with asymmetric minimum   with  peacewise linear and exponential profiles. The coverage rates  obtained are respectively of 98\% and 
94\% for the bandwidth $h=20$. The  simulations have  put into evidence that  the   estimated confidence intervals are not symmetric with respect to the true minimum location. Since the choice of the bandwidth  is crucial, 
the role of this parameter  should be further investigated in order to  provide a criterion for  bandwidth selection.

The application concerns the   estimate of the   confidence interval of the minimum location  in  RT cardiac series  during the  stress test. In the series here analyzed,  the   estimated   95\% confidence interval has
a width of 21 beats, corresponding to about 10 seconds,  which provides a  sufficient precision.  
The application of  the  model   to  cardiac  data   series has several limitations. 
First of all our analysis revealed itself  globally effective in one case of RT series; it should be validated by 
 extensive application  to other cases, due to the large  inter individual variability.
The exponential component   is assumed to model the response of the cardiac regulation to the exercise, but  it  does not  include  the  measurement noise.  The last one  has a minor impact on the global value of the series, but this point should be further investigated. The  assumption that  the stochastic component  in RT series  is  an exponential requires validation, since previous informations  were concerning the RR series.
We have noticed that the exponential fit of the residual is poor  if  the bandwidth $h$ is large, as expected.
The  time intervals are measured  by the   ECG recorder  with a resolution of 2 milliseconds so that they  are  only approximately described  by  a continuous variable.  
 However the focus of the present study is on the trend component  of the series, which  during the stress test   plays a dominant role with  respect to the  random one. 
Finally the serial independence  should be also  carefully investigated. Testing serial independence of residuals   in cardiac series after detrending    requires non parametric methods as the one considered in \cite{cammarota2011-2}.

\section{Funding}
MIUR, Italian Ministry of Instruction, University and Research.

\bibliography{cammarota.bib}

\end{document}